\documentclass[sn-mathphys]{sn-jnl}
\usepackage[english]{babel}
\usepackage{amsmath}
\usepackage{mathtools}
\usepackage{amssymb}
\usepackage{mathrsfs}
\usepackage{delarray}
\usepackage{tabu}
\usepackage{environ}
\usepackage{amsthm}
\usepackage{stmaryrd}
\usepackage{calc}
\usepackage{algorithm}
\usepackage{algpseudocode}
\usepackage{hyperref}

\usepackage{graphicx}
\graphicspath{ {./images/} }

\usepackage{xcolor}

\usepackage{tikz}
\usetikzlibrary{arrows, automata, positioning, arrows.meta, calc}


\tikzset{configuration/.style = {state, rectangle, minimum height=0.5cm},}

\let\emptyset\varnothing

\newcommand{\B}{\mathbb{B}}
\newcommand{\N}{\mathbb{N}}




\algblock{Input}{EndInput}
\algnotext{EndInput}
\algblock{Output}{EndOutput}
\algnotext{EndOutput}

\newtheorem{theorem}{Theorem}
\newtheorem{lemma}{Lemma}

\date{}




\begin{document}

\title{A sequential solution to the density classification task using an intermediate alphabet}


\author*[1]{\fnm{Pac\^ome} \sur{Perrotin}}\email{pacome.perrotin@gmail.com}

\author[1,2]{\fnm{Pedro} \sur{Paulo Balbi}}\email{pedrob@mackenzie.br}

\author[1,2]{\fnm{Eurico} \sur{Ruivo}}\email{eurico.ruivo@mackenzie.br}

\affil[1]{\orgdiv{P\'{o}s-Gradua\c{c}\~{a}o em Engenharia El\'{e}trica e Computa\c{c}\~{a}o}, \orgname{Universidade Presbiteriana Mackenzie}, \orgaddress{\street{Rua da Consola\c{c}\~{a}o 896}, \city{S\~{a}o Paulo}, \postcode{01302-907}, \state{SP}, \country{Brazil}}}

\affil[2]{\orgdiv{Faculdade de Computa\c{c}\~{a}o e Inform\'{a}tica}, \orgname{Universidade Presbiteriana Mackenzie}, \orgaddress{\street{Rua da Consola\c{c}\~{a}o 896}, \city{S\~{a}o Paulo}, \postcode{01302-907}, \state{SP}, \country{Brazil}}}


\abstract{
  We present a sequential cellular automaton of radius $\frac{1}{2}$
  as a solution to the density classification task that makes use of an intermediate
  alphabet, and converges to a clean fixed point with no remaining auxiliary or intermediate
  information.
  We extend this solution to arbitrary finite alphabets
  and to configurations in higher dimensions.
}

\keywords{Cellular automata, Density classification task, Sequential update}

\maketitle

\section{Introduction}

Cellular automata
are computational models characterised by a local
rule uniformly applied over a configuration.
The straightforwardness of their definition makes them the subject of a wide
array of applications to the modeling of complex physical
phenomena~\cite{CA31,CA39,CA2,CA54}.

Among some of the theoretical questions
studied on cellular automata, classification
tasks~\cite{balbi11} can be compared to decision problems defined on
Turing machines.
Such a task is defined by a function that assigns any initial configuration
$x \in \B^n$
to a binary value $b$. A cellular automaton is a solution to a classification
task if while executed on the initial configuration $x$ it converges to
the uniform configuration $b^n$, and this configuration is a fixed point.
The two most studied classification tasks are the parity classification task,
where $b$ is $1$ if and only if the initial configuration contains an
odd number of $1$s, and the density classification task, where $b$
is the symbol in strict majority in the initial configuration.

Various solutions to the parity classification task exist~\cite{balbi14,Ninagawa,LeeXuChau,synchParitySolutionWith150,gen_SynchParitySolution_on_Graph},
including a pure solution in the form of a cellular automaton of
radius 4~\cite{betel2013}.
However, the density classification task has been proven to not have
such a pure solution~\cite{land1995}.
The current state of the art contains various solutions
approaching the problem from different angles,
including changing the local rule during the computation~\cite{martins2005}
or using multiple tracks
in parallel to provide a solution~\cite{kari2012}.
Solutions have also been constructed by
using a continuous space of
values~\cite{Briceno2013,Leal2023},
or by using a stochastic
approach~\cite{Fates2011}.

This paper is structured by first detailing the specifications of the
variant of the density classification task presently under investigation.
We then offer a detailed and illustrated description of our novel solution,
and lastly we offer generalisations to the solution to larger alphabets
and configurations in more dimensions.

In addition to formal proofs, we tested the one-dimensional binary
solution on all initial configurations
up to size 30, the source code for which can be freely accessed online~\cite{densityChecker},
written in Rust.

\section{The problem and the general description of the solution}

\subsection{The problem}

In this paper, we focus on a non-standard solution of the density classification task, as it relies on a variant cellular automaton and the use of an intermediate alphabet, that is, the full alphabet $\Sigma$ of the solution 
is a super set of the input alphabet, $\B$. 

The formulation of the problem is as follows: for any finite initial configuration over $\B$ that has a defined density (a symbol that outnumbers every
other one), then the lattice of the cellular automaton must converge to the uniform configuration
containing only that symbol, which must be a fixed point. If the initial solution does not have a defined density, then no expectation is set on the convergence. 

Since there is no imposition on what would happen in between the initial configuration and the fixed point, a solution to the problem is free to use any number of symbols from its extended alphabet. Solutions relying on this freedom have been found in the past, namely~\cite{kari2012}.
This
particular solution uses two tracks, one that contains the initial configuration and the second that converges to the desired fixed point, the former guiding the convergence of the latter. We consider the solution presented herein as an improvement
because it always converges to a fixed point with no remaining auxiliary or intermediate information,
whereas that two-tracks solution only ensures this clean result in one of its tracks.

\subsection{The overall idea of the solution}

Our solution is a cellular automaton of radius $\frac{1}{2}$ (the reference cell being the one on the right), updated sequentially, meaning that the local rule is applied
using the value of the current cell, plus the value of the cell to its left, while the
cellular automaton as a whole is updated from left to right. In essence, this cellular automaton simulates an operating head that moves along the configuration while updating an internal memory and transforming the configuration as it proceeds.

With its radius $\frac{1}{2}$, our solution can appear to belong
in the family of one-way cellular automata~\cite{Kutrib2014}.
However, it is updated sequentially and not in parallel,
and operates on cyclic configurations.

In the following, a cycle is a complete loop of
the operating head around the configuration.
This cellular automaton tries to remove exactly one $0$ and one $1$ from the
configuration per cycle, storing them in its internal memory.
It then discards this pair when a cycle is
complete. The detection of the end of a cycle is done thanks to an internal counter.
At the end of a cycle, if only one symbol was found (say, a $0$), it then means that there
is no more of the other symbol in the configuration (so, no more $1$s). Since symbols
have been removed in equal amounts, the cellular automaton can now converge to the uniform
configuration containing the symbol it
still holds in memory (here, it would converge to $0^n$).
All these extra pieces of information -- the internal memory, the internal counter, and the marker
that allows to know which symbols have been removed from the configuration -- are encoded using the
aforementioned intermediate alphabet.

So, alongside the Boolean alphabet $\B = \{0, 1\}$, based upon which the initial configuration and the final fixed points are defined, a key aspect of  the solution is that it relies on an intermediate alphabet composed of the following
16 triplets:

\vspace{0.1cm}
\noindent
$\begin{psmallmatrix}\circ \\ 0 \\ \{0\}\end{psmallmatrix}$,
$\begin{psmallmatrix}\circ \\ 0 \\ \B\end{psmallmatrix}$,
$\begin{psmallmatrix}\circ \\ 1 \\ \{1\}\end{psmallmatrix}$,
$\begin{psmallmatrix}\circ \\ 1 \\ \B\end{psmallmatrix}$,
$\begin{psmallmatrix}\circ \\ X \\ 
\emptyset\end{psmallmatrix}$,
$\begin{psmallmatrix}\circ \\ X \\ \{0\}\end{psmallmatrix}$,
$\begin{psmallmatrix}\circ \\ X \\ \{1\}\end{psmallmatrix}$,
$\begin{psmallmatrix}\circ \\ X \\ \B\end{psmallmatrix}$,
$\begin{psmallmatrix}\bullet \\ 0 \\ \{0\}\end{psmallmatrix}$,
$\begin{psmallmatrix}\bullet \\ 0 \\ \B\end{psmallmatrix}$,
$\begin{psmallmatrix}\bullet \\ 1 \\ \{1\}\end{psmallmatrix}$,

\ \\ 
\noindent
$\begin{psmallmatrix}\bullet \\ 1 \\ \B\end{psmallmatrix}$,
$\begin{psmallmatrix}\bullet \\ X \\ \emptyset\end{psmallmatrix}$,
$\begin{psmallmatrix}\bullet \\ X \\ \{0\}\end{psmallmatrix}$,
$\begin{psmallmatrix}\bullet \\ X \\ \{1\}\end{psmallmatrix}$
and
$\begin{psmallmatrix}\bullet \\ X \\ \B\end{psmallmatrix}$.

\vspace{3pt}
Therefore, the intermediate alphabet is actually composed of triplets of symbols, organised in three layers, one on top of the other. 

At the top, the symbols $\{\circ, \bullet\}$ form an internal counter of the number of cycles.
Each cycle has a uniform top layer, with either only $\circ$ symbols or only $\bullet$ symbols,
and is always succeeded by a cycle with a uniform top layer using the opposite symbol. 
Alternating between these two symbols at each new cycle allows the solution to tell if
any triplet and its left neighbour originate from the same cycle.

In the middle layer, the symbol in $\{0, 1, X\}$
represents the value contained at any particular location of the configuration. The symbol $X$ indicates the positions in the configuration where a pair of $0$ and $1$ have been removed.

As for the bottom layer, the symbols $\{\emptyset, \{0\}, \{1\}, \B\}$ are related to the memory of the moving head of the machine encoded in the cellular automaton. This memory is a set that can store one individual $0$ and one individual $1$, which are taken from the configuration.
A memory of value $\emptyset$ means no symbol has yet been taken from the configuration in
the current cycle, while a memory of $\{0\}$ (respectively, $\{1\}$) means that only a $0$ symbol
(respectively, a $1$ symbol) has been taken in the current cycle,
and a memory of $\B$ means both a $0$ and a $1$ have been taken in the current cycle.
While a memory symbol appears at the bottom layer of every intermediate symbol of a configuration,
not all of them are meaningful. The memory symbol considered to be the actual memory
of the automaton (its \emph{active memory}) is the one that has been added most recently in the configuration.

Based on the previous description of the layers, it is clear that there are indeed $2 \times 3 \times 4 = 24$ possible triplets; however, since the triplets 
$\begin{psmallmatrix}\circ \\ 0 \\ \emptyset\end{psmallmatrix}$,
$\begin{psmallmatrix}\circ \\ 0 \\ \{1\}\end{psmallmatrix}$,
$\begin{psmallmatrix}\circ \\ 1 \\ \emptyset\end{psmallmatrix}$,
$\begin{psmallmatrix}\circ \\ 1 \\ \{0\}\end{psmallmatrix}$,
$\begin{psmallmatrix}\bullet \\ 0 \\ \emptyset\end{psmallmatrix}$,
$\begin{psmallmatrix}\bullet \\ 0 \\ \{1\}\end{psmallmatrix}$,
$\begin{psmallmatrix}\bullet \\ 1 \\ \emptyset\end{psmallmatrix}$ and
$\begin{psmallmatrix}\bullet \\ 1 \\ \{0\}\end{psmallmatrix}$
are never used,
they are not considered part of the intermediate alphabet, bringing the total down from 24 to 16. 

A warning should be made to the reader at this point: from now on we will refer to the triplets of the intermediate alphabet as its \emph{symbols}, which should not be confused with the individual symbols they are made of, as already mentioned; so, in a sense we will be stretching the meaning of the word `symbol' associated to the components of the intermediate alphabet, just to preserve the standard jargon associated to the notion of an alphabet, that is usually said to be composed of symbols.

\subsection{Mechanics of the process}

Since the constructed cellular automaton must have $0^n$ and $1^n$ as fixed points, this implies that the state transitions 
$00 \mapsto 0$ and $11 \mapsto 1$ must be part of the local rule.
This represents an action embedded in the rule that we can refer to as \emph{fixed point preservation}; other actions are represented in phases of the process which we refer to as \emph{kickstart}, \emph{propagation}, \emph{swap} and \emph{convergence}, as explained below.

\begin{itemize}

\item The \emph{kickstart} phase is applied when $01$ or $10$ is detected on the initial configuration.
These patterns are proof that the configuration is not uniform, and thus the cellular automaton
kickstarts its computation of the density of the configuration by inserting a symbol
from the intermediate alphabet. In particular, the neighbourhood $01$ maps to $\begin{psmallmatrix} \circ \\ X \\ \{1\} \end{psmallmatrix}$
and $10$ maps to $\begin{psmallmatrix} \circ \\ X \\ \{0\} \end{psmallmatrix}$.
In the $01$ case, the resulting symbol $\begin{psmallmatrix} \circ \\ X \\ \{1\} \end{psmallmatrix}$
shows that the symbol $1$ is no longer in the
configuration at this location, has been replaced
by $X$, and has been placed in the active memory $\{1\}$.
Furthermore, we initialise the internal counter
to the value $\circ$, which indicates
an odd number of cycles.
An equivalent
logic applies to the case $10$, where the $0$ is removed and placed in memory instead.\\

\item The \emph{propagation} phase applies directly after the kickstart phase, but also in any case
of the form
$ab$ where $a$ is an intermediate symbol and $b$ is either in $\B$ or is an
intermediate symbol, but with a different internal counter from $a$.
The propagation phase represents the majority of the updates once the process has started,
and its function is to propagate the counter from the left to the right, and also to
remove the current character of the configuration and to add it to the memory, if possible.\\

For example, the pair $\begin{psmallmatrix} \circ \\ X \\ \{1\}\end{psmallmatrix}\ 0$ maps to the symbol $\begin{psmallmatrix} \circ \\ X \\ \B \end{psmallmatrix}$.
In essence, the current symbol $0$ has been replaced by an intermediate symbol,
in which the symbol $0$ has been removed and placed in the memory, and
the counter $\circ$ has been propagated from the left. The memory is now $\B = \{0, 1\}$,
which means a complete pair has been collected.
Another typical case of the propagation phase is associated to the neighbourhood $\begin{psmallmatrix} \bullet \\ 0 \\ \{0\} \end{psmallmatrix}\ \begin{psmallmatrix} \circ \\ 0 \\ \B \end{psmallmatrix}$,
which maps to $\begin{psmallmatrix} \bullet \\ 0 \\ \{0\} \end{psmallmatrix}$; here, the new character is obtained by propagating
the counter and the memory from the left. As the current character $0$ is already present
in the memory, it is not removed. Note that the previous memory
$\B$ date from the previous cycle, as indicated by the outdated counter $\circ$, and can
thus be deleted in the process of the propagation.\\

To reiterate, the propagation phase serves two purposes: to propagate the current counter
and the active memory
from left to right, and to remove any character it can out of the configuration and place
them in the active memory.\\

\item The \emph{swap} phase happens any time where the updated cell is at the right hand side of another cell with an
identical counter, such as in $\begin{psmallmatrix} \circ \\ 0 \\ \B \end{psmallmatrix}\ \begin{psmallmatrix} \circ \\ X\\ \{0\}\end{psmallmatrix}$.
What happens now depends on the state of the active memory. As a reminder, the active
memory is the memory contained in the cell to the left of the cell which is to be updated. Since in this example, the active memory is $\B$, it is erased and the internal counter increased, therefore leading to the symbol $\begin{psmallmatrix} \bullet \\ X \\ \emptyset \end{psmallmatrix}$. This serves two purposes:
the \{$0, 1$\} pair that was harvested from the configuration in the previous cycle is deleted,
and the internal counter is updated, allowing for another propagation phase.\\

Two remaining cases are possible:
the active memory could be empty, or a singleton. If the memory is empty, this is the situation 
where the initial configuration had an equal amount of $0$s and $1$s. As discussed previously,
the solution is not required to converge in this case and thus this is classified as an undefined behaviour.
However, if the active memory is a singleton $\{b\}$,
the automaton now has to converge to the uniform configuration $b^n$. This is done by
simply updating to the symbol $b$. For example, the neighbourhood
$\begin{psmallmatrix} \bullet\\ 1\\ \{1\} \end{psmallmatrix}\ \begin{psmallmatrix} \bullet\\ X\\ \emptyset\end{psmallmatrix}$ would be updated to the symbol $1$. The rest
of the convergence is achieved through the next and final phase.\\

\item The \emph{convergence} phase is straightforward, and happens at any time where the left hand symbol
is in $\B$ and the one at the right is an intermediate symbol, in which case 
the left hand symbol is simply propagated, as happens, for example, with the neighbourhood $0\ \begin{psmallmatrix} \circ \\ X \\ \{0\} \end{psmallmatrix}$ which maps to $0$.
This simple propagation method leads to a uniform configuration, in our example $0^n$,
which is a fixed point and concludes the execution of the process.

\end{itemize}

A more formal specification of the local rule is presented in Table~\ref{figure-table}. Executions of the solution on configurations
of size $7$ and $13$ are represented in Figures~\ref{figure-exec7}
and~\ref{figure-exec13}, respectively.

\begin{table}[t!]
  \begin{center}
    {
\tabulinesep=1.2mm
\begin{tabular}{ | c | c | c | } 
  \hline
  \bf{Pattern} & \bf{Case name} & \bf{Example} \\
  \hline
  $00 \mapsto 0$, $11 \mapsto 1$ & Fixed point behaviour & $00 \mapsto 0$ \\
  \hline
  0 1 $\mapsto \begin{psmallmatrix} \circ \\ X \\ \{1\} \end{psmallmatrix}$,
  1 0 $\mapsto \begin{psmallmatrix} \circ \\ X \\ \{0\} \end{psmallmatrix}$
  & Kickstart &
  0 1 $\mapsto \begin{psmallmatrix} \circ \\ X \\ \{1\} \end{psmallmatrix}$ \\
  \hline
  $\begin{smallmatrix}\forall M \subseteq \B,\\b \in \B \setminus M,\\c \in \{\bullet, \circ\},\end{smallmatrix}$
  $\begin{psmallmatrix} c \\ ? \\ M \end{psmallmatrix}\ b
  \mapsto
  \begin{psmallmatrix} c \\ X \\ M \cup \{b\} \end{psmallmatrix}$
  & \begin{tabular}{@{}c@{}} First propagation phase\\and removing a character\end{tabular} &
  $\begin{psmallmatrix} \circ \\ 0 \\ \{0\} \end{psmallmatrix}\ 1
  \mapsto
  \begin{psmallmatrix} \circ \\ X \\ \B \end{psmallmatrix}$
  \\
  \hline
  $\begin{smallmatrix}\forall M \subseteq \B,\\b \in M,\\c \in \{\bullet, \circ\},\end{smallmatrix}$
  $\begin{psmallmatrix} c \\ ? \\ M \end{psmallmatrix}\ b
  \mapsto
  \begin{psmallmatrix} c \\ b \\ M \end{psmallmatrix}$
  & \begin{tabular}{@{}c@{}} First propagation phase,\\ without removing\\a character\end{tabular} &
  $\begin{psmallmatrix} \bullet \\ 1 \\ \{1\} \end{psmallmatrix}\ 1
  \mapsto
  \begin{psmallmatrix} \bullet \\ 1 \\ \{1\} \end{psmallmatrix}$
  \\
  \hline
  $\begin{smallmatrix}\forall c, c' \in \{\bullet, \circ\},\\c \neq c',\end{smallmatrix}$
  $\begin{psmallmatrix} c \\ ? \\ \B \end{psmallmatrix}
  \begin{psmallmatrix} c \\ ? \\ ? \end{psmallmatrix}
  \mapsto
  \begin{psmallmatrix} c' \\ X \\ \emptyset \end{psmallmatrix}$
  & \begin{tabular}{c} Swap phase and\\ resetting the memory \end{tabular} &
  $\begin{psmallmatrix} \bullet \\ X \\ \B \end{psmallmatrix}
  \begin{psmallmatrix} \bullet \\ X \\ \{0\} \end{psmallmatrix}
  \mapsto
  \begin{psmallmatrix} \circ \\ X \\ \emptyset \end{psmallmatrix}$
  \\
  \hline
  $\begin{smallmatrix}\forall M \subseteq \B,\\b \in \B \setminus M,\\c, c' \in \{\bullet, \circ\},\\c \neq c',\end{smallmatrix}$
  $\begin{psmallmatrix} c \\ ? \\ M \end{psmallmatrix}
  \begin{psmallmatrix} c' \\ b \\ ? \end{psmallmatrix}
  \mapsto
  \begin{psmallmatrix} c \\ X \\ M \cup \{b\} \end{psmallmatrix}$
  & \begin{tabular}{c} Propagation phase and\\ removing a character\end{tabular} &
  $\begin{psmallmatrix} \bullet \\ X \\ \emptyset \end{psmallmatrix}
  \begin{psmallmatrix} \circ \\ 1 \\ \B \end{psmallmatrix}
  \mapsto
  \begin{psmallmatrix} \bullet \\ X \\ \{1\} \end{psmallmatrix}$
  \\
  \hline
  $\begin{smallmatrix}\forall M \subseteq \B,\\b \in \B \setminus M,\\c, c' \in \{\bullet, \circ\},\\c \neq c',\end{smallmatrix}$
  $\begin{psmallmatrix} c \\ ? \\ M \end{psmallmatrix}
  \begin{psmallmatrix} c' \\ b \\ ? \end{psmallmatrix}
  \mapsto
  \begin{psmallmatrix} c \\ b \\ M \end{psmallmatrix}$
  & \begin{tabular}{@{}c@{}} Propagation phase,\\without removing\\a character\end{tabular} &
  $\begin{psmallmatrix} \circ \\ X \\ \B \end{psmallmatrix}
  \begin{psmallmatrix} \bullet \\ X \\ \{0\} \end{psmallmatrix}
  \mapsto
  \begin{psmallmatrix} \circ \\ X \\ \B \end{psmallmatrix}$
  \\
  \hline
  $\begin{smallmatrix}\forall b \in \B,\\c \in \{\bullet, \circ\},\end{smallmatrix}$
  $\begin{psmallmatrix} c \\ ? \\ \{b\} \end{psmallmatrix}
  \begin{psmallmatrix} c \\ ? \\ ? \end{psmallmatrix}
  \mapsto b$
  & \begin{tabular}{c} Swap phase and\\ starting the convergence\end{tabular} &
  $\begin{psmallmatrix} \circ \\ 0 \\ \{0\} \end{psmallmatrix}
  \begin{psmallmatrix} \circ \\ X \\ \emptyset \end{psmallmatrix}
  \mapsto 0$
  \\
  \hline
  $\forall b \in \B,$
  $b\ 
  \begin{psmallmatrix} ? \\ ? \\ ? \end{psmallmatrix}
  \mapsto b$
  & Convergence phase &
  $1\ 
  \begin{psmallmatrix} \circ \\ 1 \\ \B \end{psmallmatrix}
  \mapsto 1$
  \\
  \hline
\end{tabular}
}
  \end{center}
  \caption{
    \label{figure-table}
    Table detailing the different cases of the proposed solution to the density classification task
    using an intermediate alphabet. The propagation phase is described by $4$ different
    sets of cases, and the swap phase by $2$.
    The symbol \lq$?$\rq\ in patterns represents a wildcard.
  }
\end{table}

\begin{figure}[t!]
  \begin{center}
    \begin{tikzpicture}
  \newcommand{\xgap}{0.5}
  \newcommand{\ygap}{1.3}

  \draw[gray!50!white] foreach \x in {0, ..., 7} {
    (\x * \xgap - \xgap / 2, - \ygap / 2) -- (\x * \xgap - \xgap / 2, \ygap * 5.5)
  };
  \draw[gray!50!white] foreach \y in {0, ..., 6} {
    (- \xgap / 2, \y * \ygap - \ygap / 2) -- (\xgap * 6.5, \y * \ygap - \ygap / 2)
  };

  \foreach \x\y\nodetext
  in {
    0/0/0, 1/0/0, 2/0/0, 3/0/0, 4/0/0, 5/0/0, 6/0/0,
    3/1/0, 4/1/0, 5/1/0, 6/1/0,
    0/4/0, 1/4/0, 2/4/0,
    0/5/0, 1/5/0, 2/5/0, 3/5/1, 4/5/0, 5/5/1, 6/5/0
  }
  {
    \draw (\x * \xgap, \y * \ygap) node {\nodetext};
  }

  \foreach \x\y\tapesymbol\memory
  in {
    0/2/0/\B, 1/2/0/\B, 2/2/0/\B,
    3/3/X/\emptyset, 4/3/X/\emptyset, 5/3/X/\{1\}, 6/3/X/\B
  }
  {
    \draw (\x * \xgap, \y * \ygap) node {$\begin{matrix} \bullet \\ \tapesymbol \\ \memory \end{matrix}$};
  }

  \foreach \x\y\tapesymbol\memory
  in {
    0/1/X/\{0\}, 1/1/0/\{0\}, 2/1/0/\{0\},
    3/2/X/\emptyset, 4/2/X/\emptyset, 5/2/X/\emptyset, 6/2/X/\emptyset,
    0/3/0/\B, 1/3/0/\B, 2/3/0/\B,
    3/4/X/\{1\}, 4/4/X/\B, 5/4/1/\B, 6/4/0/\B
  }
  {
    \draw (\x * \xgap, \y * \ygap) node {$\begin{matrix} \circ \\ \tapesymbol \\ \memory \end{matrix}$};
  }

\end{tikzpicture}
  \end{center}
  \caption{
    \label{figure-exec7}
    An execution of the proposed solution to a configuration of size $7$, in the
    form of a table. The first row contains the initial configuration,
    and the execution operates from top to bottom.
    The first propagation phase starts at line $2$, column $4$. This
    propagation phase is followed by a second one with internal counter
    $\bullet$, and together they remove two symbols $0$ and two symbols
    $1$ from the configuration, leaving only $0$ symbols. The third
    propagation phase ends with the active memory $\{0\}$, which
    leads to the convergence phase starting at line $5$.
  }
\end{figure}

\begin{figure}[t!]
  \begin{center}
    \scalebox{0.9}{
\begin{tikzpicture}
  \newcommand{\xgap}{0.5}
  \newcommand{\ygap}{-1.3}

  \draw[gray!50!white] foreach \x in {0, ..., 13} {
    (\x * \xgap - \xgap / 2, - \ygap / 2) -- (\x * \xgap - \xgap / 2, \ygap * 9.5)
  };
  \draw[gray!50!white] foreach \y in {0, ..., 10} {
    (- \xgap / 2, \y * \ygap - \ygap / 2) -- (\xgap * 12.5, \y * \ygap - \ygap / 2)
  };

  \foreach \x\y\nodetext
  in {
    0/0/0, 1/0/0, 2/0/0, 3/0/1, 4/0/0, 5/0/1, 6/0/1, 7/0/0, 8/0/1, 9/0/1, 10/0/0, 11/0/1, 12/0/0,
    0/1/0, 1/1/0, 2/1/0,
    3/8/0, 4/8/0, 5/8/0, 6/8/0, 7/8/0, 8/8/0, 9/8/0, 10/8/0, 11/8/0, 12/8/0,
    0/9/0, 1/9/0, 2/9/0, 3/9/0, 4/9/0, 5/9/0, 6/9/0, 7/9/0, 8/9/0, 9/9/0, 10/9/0, 11/9/0, 12/9/0
  }
  {
    \draw (\x * \xgap, \y * \ygap) node {\nodetext};
  }

  \foreach \x\tapesymbol\memory
  in {
    3/X/\{1\}, 4/X/\B, 5/1/\B, 6/1/\B, 7/0/\B,
    8/1/\B, 9/1/\B, 10/0/\B, 11/1/\B, 12/0/\B
  }
  {
    \draw (\x * \xgap, 1 * \ygap)
      node {$\begin{matrix}
        \circ \\
        \tapesymbol \\
        \memory
      \end{matrix}$};
  }

  \foreach \x\tapesymbol\memory
  in {
    0/0/\B, 1/0/\B, 2/0/\B
  }
  {
    \draw (\x * \xgap, 2 * \ygap)
      node {$\begin{matrix}
        \circ \\
        \tapesymbol \\
        \memory
      \end{matrix}$};
  }

  \foreach \x\tapesymbol\memory
  in {
    3/X/\emptyset, 4/X/\emptyset, 5/X/\{1\}, 6/1/\{1\}, 7/X/\B,
    8/1/\B, 9/1/\B, 10/0/\B, 11/1/\B, 12/0/\B
  }
  {
    \draw (\x * \xgap, 2 * \ygap)
      node {$\begin{matrix}
        \bullet \\
        \tapesymbol \\
        \memory
      \end{matrix}$};
  }

  \foreach \x\tapesymbol\memory
  in {
    0/0/\B, 1/0/\B, 2/0/\B
  }
  {
    \draw (\x * \xgap, 3 * \ygap)
      node {$\begin{matrix}
        \bullet \\
        \tapesymbol \\
        \memory
      \end{matrix}$};
  }

  \foreach \x\tapesymbol\memory
  in {
    3/X/\emptyset, 4/X/\emptyset, 5/X/\emptyset, 6/X/\{1\}, 7/X/\{1\},
    8/1/\{1\}, 9/1/\{1\}, 10/X/\B, 11/1/\B, 12/0/\B
  }
  {
    \draw (\x * \xgap, 3 * \ygap)
      node {$\begin{matrix}
        \circ \\
        \tapesymbol \\
        \memory
      \end{matrix}$};
  }

  \foreach \x\tapesymbol\memory
  in {
    0/0/\B, 1/0/\B, 2/0/\B
  }
  {
    \draw (\x * \xgap, 4 * \ygap)
      node {$\begin{matrix}
        \circ \\
        \tapesymbol \\
        \memory
      \end{matrix}$};
  }

  \foreach \x\tapesymbol\memory
  in {
    3/X/\emptyset, 4/X/\emptyset, 5/X/\emptyset, 6/X/\emptyset, 7/X/\emptyset,
    8/X/\{1\}, 9/1/\{1\}, 10/X/\{1\}, 11/1/\{1\}, 12/X/\B
  }
  {
    \draw (\x * \xgap, 4 * \ygap)
      node {$\begin{matrix}
        \bullet \\
        \tapesymbol \\
        \memory
      \end{matrix}$};
  }

  \foreach \x\tapesymbol\memory
  in {
    0/0/\B, 1/0/\B, 2/0/\B
  }
  {
    \draw (\x * \xgap, 5 * \ygap)
      node {$\begin{matrix}
        \bullet \\
        \tapesymbol \\
        \memory
      \end{matrix}$};
  }

  \foreach \x\tapesymbol\memory
  in {
    3/X/\emptyset, 4/X/\emptyset, 5/X/\emptyset, 6/X/\emptyset, 7/X/\emptyset,
    8/X/\emptyset, 9/X/\{1\}, 10/X/\{1\}, 11/1/\{1\}, 12/X/\{1\}
  }
  {
    \draw (\x * \xgap, 5 * \ygap)
      node {$\begin{matrix}
        \circ \\
        \tapesymbol \\
        \memory
      \end{matrix}$};
  }

  \foreach \x\tapesymbol\memory
  in {
    0/X/\B, 1/0/\B, 2/0/\B
  }
  {
    \draw (\x * \xgap, 6 * \ygap)
      node {$\begin{matrix}
        \circ \\
        \tapesymbol \\
        \memory
      \end{matrix}$};
  }

  \foreach \x\tapesymbol\memory
  in {
    3/X/\emptyset, 4/X/\emptyset, 5/X/\emptyset, 6/X/\emptyset, 7/X/\emptyset,
    8/X/\emptyset, 9/X/\emptyset, 10/X/\emptyset, 11/X/\{1\}, 12/X/\{1\}
  }
  {
    \draw (\x * \xgap, 6 * \ygap)
      node {$\begin{matrix}
        \bullet \\
        \tapesymbol \\
        \memory
      \end{matrix}$};
  }

  \foreach \x\tapesymbol\memory
  in {
    0/X/\{1\}, 1/X/\B, 2/0/\B
  }
  {
    \draw (\x * \xgap, 7 * \ygap)
      node {$\begin{matrix}
        \bullet \\
        \tapesymbol \\
        \memory
      \end{matrix}$};
  }

  \foreach \x\tapesymbol\memory
  in {
    3/X/\emptyset, 4/X/\emptyset, 5/X/\emptyset, 6/X/\emptyset, 7/X/\emptyset,
    8/X/\emptyset, 9/X/\emptyset, 10/X/\emptyset, 11/X/\emptyset, 12/X/\emptyset
  }
  {
    \draw (\x * \xgap, 7 * \ygap)
      node {$\begin{matrix}
        \circ \\
        \tapesymbol \\
        \memory
      \end{matrix}$};
  }

  \foreach \x\tapesymbol\memory
  in {
    0/X/\emptyset, 1/X/\emptyset, 2/X/\{0\}
  }
  {
    \draw (\x * \xgap, 8 * \ygap)
      node {$\begin{matrix}
        \circ \\
        \tapesymbol \\
        \memory
      \end{matrix}$};
  }

\end{tikzpicture}
}
  \end{center}
  \caption{
    \label{figure-exec13}
    An execution of the proposed solution to a configuration of size $13$, in the
    form of a table. The first row contains the initial configuration,
    and the execution operates from top to bottom.
    This execution contains a total of $7$ propagation phases,
    which is also the count of $1$ symbols in the starting configuration,
    plus one.
  }
\end{figure}

\section{Proofs}

For what follows, we first introduce a few technical terms:
\begin{itemize}
\item $F$ is the solution previously introduced
in the paper. 
\item For any configuration $x$ of length $n$, $F(x)$ denotes the update of $x$
through the solution. 
\item For any integer $k \in \N$, $F^k(x)$ is the repetition of
the application of $F$ on $x$,
$k$ times. 
\item Since our solution works in the sequential update schedule, we will use a notation
to denote parts of an update. For any integer $i \in \{1, \ldots, n\}$, $F^{k, i}(x)$ denotes
the configuration obtained after $k$ applications of function $F$, followed by the application
of $F$'s local rule on cells $1$ to $i$, in order. In effect, $F^{k, n}(x)$ is equivalent
to $F^{k + 1}(x)$.
We denote $|x|_0$ and $|x|_1$ the number of $0$ and $1$ symbols
in the configuration $x$.
\end{itemize}

We call the \emph{active cell} the cell to the left of the next cell to be updated.
If we consider some configuration $F^{k, i}(x)$ taken at some point of an execution,
the active cell is the cell number $i$. If the symbol contained in that cell is
from the intermediate
alphabet, its memory part is called the \emph{active memory}
of the automata.

For the lemma that follows, we specify that a propagation phase starts when the first symbol
with an updated internal counter is inserted,
and lasts until the configuration has a uniform internal counter.
The convergence phase at the end of an execution is not considered as a propagation phase.
We refer to as a rule case as an application of the local rule. For example,
by propagation rule case we mean the application of the local rule in the propagation
phase.

\begin{lemma}
  \label{lemma-invariant}
  Throughout any propagation phase, the count of $0$ and $1$ symbols
  in the configuration
  plus the symbols contained in the active memory are preserved.
\end{lemma}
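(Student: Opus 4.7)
The plan is to recast the lemma as a conserved quantity preserved by each individual sequential step, and then to verify it by a short case analysis on the rules of Table~\ref{figure-table} that can fire inside a propagation phase. For $b \in \B$ and a configuration $y$ occurring inside the phase, I would set $N_b(y) := |y|_b + \mathbf{1}[b \in \mu(y)]$, where $\mu(y)$ denotes the active memory of $y$, i.e.\ the bottom layer of the active cell; by definition of a propagation phase this cell always carries an intermediate symbol, so $\mu(y)$ is well-defined. The lemma is then equivalent to the per-step identity $N_b(F^{k, i+1}(x)) = N_b(F^{k, i}(x))$ for every step $i$ inside the phase and every $b \in \B$, from which the full statement follows by induction on the sequential position $i$ (including the wrap-around from cell $n$ back to cell $1$).

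By the definition of a propagation phase, at the moment of updating cell $i+1$ the active cell $i$ carries the freshly propagated counter $c$, while cell $i+1$ either lies in $\B$ or carries the opposite counter $c' \neq c$. This matches exactly the hypotheses of the four ``first propagation'' and ``propagation'' rule cases of the table, and no other rule case can fire. In each of these sub-cases the updated cell inherits the counter $c$, and one of two things happens to its middle and bottom layers. Either the middle symbol $b$ of cell $i+1$ already lies in the current active memory $M$, in which case the middle layer is copied verbatim and the new memory is $M$ again, so both $|y|_b$ and $\mathbf{1}[b \in \mu(y)]$ are unchanged; or $b \in \B \setminus M$, in which case the middle layer becomes $X$ (so $|y|_b$ drops by one) and the new memory is $M \cup \{b\}$ (so $\mathbf{1}[b \in \mu(y)]$ rises from $0$ to $1$), and the two effects cancel. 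In both branches the other symbol $b' \neq b$ is neither created nor destroyed in the configuration, and since $b' \in M$ iff $b' \in M \cup \{b\}$, the indicator for $b'$ is untouched as well, so $N_{b'}$ is also preserved.

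The main point that will require care in the write-up is that the active memory itself migrates with the head: before the step it is carried by cell $i$, after the step by cell $i+1$. I would therefore phrase the bookkeeping explicitly as a comparison between $\mu(F^{k, i}(x))$, read off cell $i$, and the bottom layer of the freshly written cell $i+1$ in $F^{k, i+1}(x)$. The reason this is benign is that the local rule only overwrites cell $i+1$, so cell $i$'s bottom layer remains frozen during this step and cannot spuriously contribute a symbol to $N_b$ on either side of the comparison; only the single cell whose value changes matters, and that change is exactly the bookkeeping described above. This is where the argument is not purely mechanical, but once this is spelled out, the case analysis closes the proof.
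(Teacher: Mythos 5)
Your proposal is correct and follows essentially the same route as the paper's own proof: the paper also argues step by step that within a propagation phase each update either moves the current symbol from the configuration into the (migrating) active memory or copies the memory while leaving the configuration untouched, so the combined count is conserved. Your version merely formalises this as a per-step conserved quantity $N_b$ and makes explicit the bookkeeping of the memory moving from cell $i$ to cell $i+1$, which the paper leaves implicit.
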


\begin{proof}
  At the start of the propagation phase, the active memory is $\emptyset$. Thus, the count
  of $0$ and $1$ symbols at this point is exactly what is contained in the configuration.
  Starting from there and at each step, the local rule attempts to remove the local symbol
  in the configuration and to place it in memory, if possible. If this is the case, the removed
  symbol would then be contained in the active memory, preserving the count.
  If no symbol is taken, the active memory is simply copied over and the configuration is
  not altered, thus conserving the count. 
\end{proof}

\begin{lemma}
  \label{lemma-count}
  Let $x$ be a configuration and $k$ and $i$ be integers such that
  $F^{k, i}(x)$ is a configuration which has completed $p$ propagation
  phases and has not started a convergence phase.
  The count of $0$ and $1$ symbols in the configuration
  and the active memory of $F^{k, i}(x)$ are $|x|_0 - p$ and $|x|_1 - p$.
\end{lemma}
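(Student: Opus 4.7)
The plan is induction on $p$, with Lemma~\ref{lemma-invariant} covering the within-phase behaviour and a focused analysis of the swap step accounting for the drop in count between phases.

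For the base case $p = 0$, any configuration $F^{k,i}(x)$ that has not yet completed a propagation phase is either still in $\B^n$ (possibly after some applications of $00 \mapsto 0$ and $11 \mapsto 1$, which leave both counts unchanged and create no active memory) or lies within phase $1$. Phase $1$ opens with a kickstart rule of the form $ab \mapsto \begin{psmallmatrix} \circ \\ X \\ \{b\} \end{psmallmatrix}$, which moves exactly one symbol from the configuration into the freshly-created active memory, preserving the total count. From there, Lemma~\ref{lemma-invariant} keeps the count at $|x|_0$ and $|x|_1$ through every subsequent rule case of phase $1$, so the claim holds throughout phase $1$.

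For the inductive step, I would assume the claim at $p$ and analyse the single swap step that starts phase $p+2$. The hypothesis that convergence has not started excludes the swap case with singleton active memory (which would start convergence), so the applicable case must be $\begin{psmallmatrix} c \\ ? \\ \B \end{psmallmatrix} \begin{psmallmatrix} c \\ ? \\ ? \end{psmallmatrix} \mapsto \begin{psmallmatrix} c' \\ X \\ \emptyset \end{psmallmatrix}$. This step turns the active memory from $\B$ into $\emptyset$, subtracting exactly one $0$ and one $1$ from the total. Provided the overwritten cell already carries middle $X$, the configuration itself is unchanged in $0$- and $1$-count by this swap, and Lemma~\ref{lemma-invariant} then maintains the new total $|x|_0 - (p+1)$ and $|x|_1 - (p+1)$ throughout phase $p+2$.

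The main obstacle is establishing that the cell overwritten at that swap always has middle $X$; otherwise the swap would additionally destroy a $0$ or a $1$ and break the per-phase accounting. I would close this gap by the following cyclic-trajectory observation: the right-hand cell of the swap is precisely the cell that started phase $p+1$, and its middle was set to $X$ at that starting moment (by the kickstart if $p+1 = 1$, and by the previous swap otherwise). Since the head moves strictly to the right through the cyclic configuration and only returns to this cell at the swap firing now, its middle is still $X$ when this swap takes place, which closes the induction.
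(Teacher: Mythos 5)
Your proof is correct and follows essentially the same route as the paper's: Lemma~\ref{lemma-invariant} gives the within-phase invariance, and the swap ending each phase discards a $\B$ memory, lowering both counts by exactly one, with induction on $p$ tying the phases together. Your extra verification that the cell overwritten by the swap already carries the middle symbol $X$ (so the swap removes nothing additional from the configuration) is a detail the paper leaves implicit, and your cyclic-trajectory argument for it is sound.
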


\begin{proof}
We observe that at the end of a propagation phase, if the active memory is $\B$, 
a swap is applied that starts a new propagation phase and
  resets the active memory 
to $\emptyset$. Thus, if at the end of a propagation phase the count of the $0$ and $1$ symbols
  shared between the configuration and the active memory was
  $(|F^{k, i}(x)|_0, |F^{k, i}(x)|_1)$,
the count starting in the new propagation phase has to be
  $(|F^{k, i}(x)|_0 - 1, |F^{k, i}(x)|_1) - 1$.
  Lemma~\ref{lemma-invariant} then tells us that this count is preserved until the start
  of the next propagation phase.

  Assuming that $|x|_0$ and $|x|_1$ are both higher than $p$,
  a sequence of $p$ propagation phases is observed,
  which implies a final count of $|x|_0 - p$ and $|x|_1 - p$.
\end{proof}

\begin{theorem}
  The sequential
  cellular automaton $F$ is a solution to the density classification task using an intermediate
  alphabet.
\end{theorem}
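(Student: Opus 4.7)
My plan is to case-split on whether $x$ is uniform. If $x = 0^n$ or $x = 1^n$, then the fixed-point-preservation rules $00 \mapsto 0$ and $11 \mapsto 1$ immediately give $F(x) = x$, matching the required target. Otherwise, since $x$ is cyclic and contains both symbols, it must exhibit at least one adjacent pair $01$ or $10$, so the initial sweep of $F$ will eventually encounter such a pair and fire the kickstart rule, launching propagation phase $1$ with a singleton active memory and counter $\circ$.

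For a non-uniform $x$ with defined density, assume without loss of generality $|x|_0 > |x|_1$, and write $m = |x|_1 \geq 1$. I would show by induction on $p \in \{1, \ldots, m\}$ that propagation phase $p$ terminates with active memory equal to $\B$, which triggers the swap-and-reset rule and launches phase $p+1$. The base case and inductive step reduce to the same verification: at the start of phase $p$, Lemma~\ref{lemma-count} guarantees the configuration together with the (empty) active memory contains $|x|_0 - (p-1) \geq 1$ zeros and $|x|_1 - (p-1) \geq 1$ ones; as the operating head makes one full revolution, the first cell with middle symbol $0$ and the first cell with middle symbol $1$ that it encounters are extracted by the "removing a character" propagation cases, so by the time the phase completes the memory equals $\B$ and the swap rule resetting the memory fires.

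For $p = m+1$, Lemma~\ref{lemma-count} yields $|x|_0 - m > 0$ zeros and no ones across the configuration and memory combined. The head therefore picks up exactly one $0$ during the phase, and subsequent middle-$0$ encounters trigger the "without removing" case since $0 \in M$; the phase consequently ends with active memory $\{0\}$. This triggers the "swap and starting the convergence" rule, which writes the bare symbol $0$ into the target cell. From then on, the convergence rule $0\,\begin{psmallmatrix} ? \\ ? \\ ? \end{psmallmatrix} \mapsto 0$ together with $00 \mapsto 0$ propagates $0$ around the whole cycle within at most two further applications of $F$, yielding the fixed point $0^n$. A symmetric argument handles $|x|_0 < |x|_1$ to produce $1^n$.

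The main obstacle I anticipate is rigorously justifying the geometric claims underlying the induction: that each propagation phase corresponds to exactly one revolution of the head, that every swap fires at the very cell where the original kickstart happened, and that cells with middle symbol $X$ (created by prior removals) are carried transparently through propagation without disturbing the count maintained by Lemma~\ref{lemma-count}. Once these invariants are spelled out, the remaining work is a routine case-check against the rule table in Table~\ref{figure-table}.
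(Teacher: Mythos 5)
Your proposal is correct and follows essentially the same route as the paper: handle the uniform case via the fixed-point rules, use the kickstart to launch propagation, invoke Lemma~\ref{lemma-count} to show that phases repeat with memory $\B$ while both symbols remain (you make this an explicit induction, the paper states it directly), and conclude that the $(\min(|x|_0,|x|_1)+1)$-th phase ends with a singleton memory holding the majority symbol, after which the convergence rules yield the uniform fixed point. The geometric invariants you flag as remaining obstacles (one revolution per phase, swaps recurring at the kickstart cell, transparency of $X$ cells) are treated at the same informal level in the paper's own argument, so your write-up is not less complete than the published proof.
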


\begin{proof}
  Let $x \in \B^n$ be an initial configuration.
  If $|x|_0 = |x|_1$, then no special behaviour is expected. We now assume that $|x_0| \neq |x_1|$.
  If $x$ is uniform in value, then it is a fixed point of $F$, which is a required behaviour.
  We now assume that $x$ is not uniform.

  This implies that a pattern $01$ or $10$ appears somewhere in the configuration,
  which implies that the kickstart rule case applies at some point, which in turns implies
  a number of propagation phases. We now argue that exactly $min(|x|_0, |x|_1) + 1$ propagation phases
  happen in the execution, followed by a convergence phase.

  As shown in Lemma~\ref{lemma-count}, the succession of $p$ propagation phases removes
  $p$ symbols $0$ and $1$ from the configuration and the active memory.
  We argue that as long as the configuration and active memory contains a symbol $0$ and $1$
  at the beginning of a propagation phase, then the active memory will be $\B$ at the end
  of a propagation phase. This can be seen because all propagation rule cases
  try to capture symbols from the configuration into the memory at any occasions, and all
  symbols of the configurations are considered this way during a propagation phase.
  Thus, as long as the configuration and active memory contains both a $0$ and $1$,
  another propagation phase will follow the current one. It follows that
  a sequence of $min(|x|_0, |x|_1)$ propagation phases happen in a sequence, after which there will be
  no more $0$ or $1$ left in the configuration and active memory, which leads to an extra
  propagation phase which ends with an active memory of the form $\{b\}$, 
  which then leads
  to a convergence phase.

  When this is the case, we argue that the remaining symbol (either $0$ and $1$) had a greater
  count in the initial configuration than the other one, which implies that it was the 
  majority symbol in the initial configuration.
  Furthermore, at the end of the last configuration phase, the active memory $\{b\}$ contains
  the symbol in majority, and the following convergence phase turns the configuration
  into the fixed point $b^n$, which proves that $F$ is a sequential solution to the
  density classification task, using an intermediate alphabet. 
\end{proof}

\section{Generalisations}

This solution is versatile enough to be generalised to support any finite
alphabet to compute a density over, and also to support configurations
in higher dimensions.
In this section we present the necessary changes to realise these
generalisations, along with proofs of their correctness.

\subsection{Larger alphabets}

Let us consider $S$ some finite alphabet. The generalisation of the density
classification task over that alphabet is the task of converging to the
uniform configuration $s^n$ if the symbol $s \in S$
is in strict majority over the
input configuration $x \in S^n$ of size $n$. Such a uniform configuration
must be a fixed point.

We denote $F_S$ the generalised solution, working on the extended
alphabet
$\Sigma = S \cup (\{ \circ, \bullet \} \times ( S \cup \{X\} ) \times 2^S)$,
where $2^S$ stands for the power set of $S$, the set of all the subsets of $S$.
This intermediate alphabet grows exponentially in the size of the input
alphabet $S$.
Similarly to the binary case, $F_S$ is a sequential cellular automaton
of radius $\frac{1}{2}$.

The rule cases are similar to the ones in the binary solution.
For any symbol $s \in S$, the rule $s\ s \mapsto s$ ensures that the uniform
configurations are fixed points. Any mismatch $s, s' \in S$
leads to the kickstart
rule case $s\ s' \mapsto \begin{psmallmatrix}\circ\\ X\\ \{s'\}\end{psmallmatrix}$. The propagation phases operate
as normal, moving symbols from the configuration into the active memory
when possible. The swap cases lead to another propagation phase
as long as the active memory contains more than one symbol, and to
a convergence phase if the memory contains exactly one symbol. An empty
active memory at the end of a propagation phase leads to undefined behaviour.

\begin{theorem}
  The sequential cellular automaton $F_S$ is a solution to the density classification task
  over the set $S$ using an intermediate alphabet.
\end{theorem}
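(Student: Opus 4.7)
The plan is to mirror the binary proof of the previous theorem almost verbatim, upgrading the two preparatory lemmas so that they keep track of the count of every symbol of $S$ rather than just of $0$ and $1$. First I would prove the $S$-version of Lemma~\ref{lemma-invariant}: throughout any propagation phase of $F_S$ and for every symbol $s \in S$, the sum of the number of occurrences of $s$ in the configuration and in the active memory is preserved. The argument carries over essentially verbatim from the binary case because every propagation rule case either transfers one symbol from the configuration into the active memory (which leaves the joint count unchanged) or merely duplicates the active memory rightwards without touching the configuration.

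Next I would generalise Lemma~\ref{lemma-count} to the statement: after $p$ completed propagation phases, with no convergence phase yet started, the count of each symbol $s \in S$ in the configuration together with the active memory equals $\max(|x|_s - p,\,0)$. The induction step uses the swap rule case that starts a new propagation phase: it fires only when the active memory has at least two elements, and it then discards exactly one copy of every symbol currently present in the memory. Combined with the generalised Lemma~\ref{lemma-invariant}, this forces the count of each $s$ with $|x|_s > p$ to drop by exactly one from one phase to the next, while symbols whose count has already reached zero simply stay at zero, since the next sweep finds nothing of that symbol to pick up.

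With both lemmas in hand, the theorem itself reduces to a short case analysis. If $x$ has no strict majority there is nothing to prove, and if $x$ is uniform then the rules $s\,s \mapsto s$ make it a fixed point. Otherwise, let $s^*$ be the strict majority symbol of $x$ and set $k^* = \max_{s \neq s^*} |x|_s$, so that $|x|_{s^*} > k^*$. Some mismatched pair $s\,s'$ then appears in $x$, triggering a kickstart followed by a sequence of propagation phases. The generalised Lemma~\ref{lemma-count} shows that after exactly $k^*$ complete propagation phases every symbol other than $s^*$ has vanished from the configuration and the active memory; the $(k^*+1)$-th propagation phase therefore collects only copies of $s^*$ and ends with active memory $\{s^*\}$. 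The singleton-memory swap case fires instead of another reset, starting the convergence phase, which sweeps $s^*$ rightwards to produce the fixed point $(s^*)^n$.

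The main technical obstacle I anticipate is the bookkeeping around the three possible sizes of the active memory at the end of a propagation phase -- at least two elements (further propagation), exactly one (convergence), empty (undefined behaviour) -- and the need to check that under the strict majority hypothesis the execution visits the first outcome for exactly $k^*$ phases, then falls into the singleton outcome precisely once, and never reaches the empty case. The strict inequality $|x|_{s^*} > \max_{s \neq s^*}|x|_s$ is exactly what is needed to rule out the empty case and to guarantee a well-timed transition to convergence; once this accounting is clean, no further work is needed beyond observing that all other rule cases behave identically regardless of the size of $S$.
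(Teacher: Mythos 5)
Your proposal is correct and follows essentially the same route as the paper: mimic the binary argument, with the number of propagation phases now counted as $\max_{s \neq s^*} |x|_s$ (the count of the second most frequent symbol), after which one extra phase ends with the singleton memory $\{s^*\}$ and triggers convergence. Your explicit generalisation of the two lemmas, with the per-symbol count $\max(|x|_s - p, 0)$, is just a more carefully spelled-out version of the bookkeeping the paper's proof invokes implicitly.
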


\begin{proof}
  We follow the steps of the proof in the binary case, the main difference
  being that we count the number of propagation phases in a different way.
  In each propagation phase, the active memory accumulates one of each symbol
  of $S$ still in the configuration, and the propagation phases
  stop when the active memory is a singleton at the end of propagation.
  Such an event happens when each symbol that is not in majority in the initial
  configuration has been taken out. This time is exactly the maximum of the
  counts of the non-majority symbols, or the count of the second most
  present symbol in the initial configuration. After this exact amount
  of propagation phases, the configuration only contains $X$ and
  the majority symbols, and an extra propagation phase leads to a convergence phase. 
\end{proof}

\subsection{Higher dimensions}

For $d \geq 1$ an integer, we generalise the density classification task
over an alphabet $S$ to any number $d$ of dimensions. Instead of
considering cyclic configurations, we here consider cuboids of sides
$(n_1, n_2, \ldots, n_d)$
in $d$ dimensions, which associate a symbol in $S$ to any coordinate
$(x_1, x_2, \ldots, x_d)$. The density task in higher dimensions
then asks to replace all
symbols of this cuboids by the symbol which is in strict majority in the
initial configuration.

The generalisation of our solution on $d$-dimensional configurations,
denoted $F_{S, d}$, is still a sequential
rule, which explores the space of coordinates in an interative order.
We denote it $F_{S, d}$.
For example, for $d = 2$ and $n_1 = n_2 = 3$, the update sequence
is 
$((0, 0), (1, 0), (2, 0), (0, 1), (1, 1), (2, 1), (0, 2), (1, 2), (2, 2))$.

The neighbourhood of our solution is extended to contain one neighbour per
dimension, plus the cell itself.
For $(x_1, x_2, \ldots, x_d)$ the position of a given cell,
the neighbours of this cell are those with coordinates
$(x_1 - 1, x_2, \ldots, x_d)$, $(x_1 - 1, x_2 - 1, x_3, \ldots, x_d)$,
$(x_1 - 1, x_2 - 1, x_3 - 1, x_4, \ldots, x_d)$, $\ldots$,
$(x_1 - 1, x_2 - 1, \ldots, x_d - 1)$, plus $(x_1, x_2, \ldots, x_d)$, the
cell itself.
This neighbourhood is a subset of the Moore neighbourhood
in $d$ dimensions, as it always contains corners.

The local rule of $F_{S, d}$ is defined as the following
generalisation of the one-dimensional solution over $S$.
Firstly, the fixed point behaviour is applied when the neighbourhood is
of a uniform value in $S$.
Secondly, the kickstart rule case is applied
if the neighbourhood only contains symbols in $S$ but is not uniform.

Thirdly, the propagation phase must be applied by discerning which
neighbour was the cell to be last updated in the sequence;
or more precisely, it must determine what the active memory is.
This is done by considering all the neighbours which are in the intermediate
alphabet, and taking the larger memory, by comparing them by set inclusion.
We then apply the one-dimensional local rule as if this active memory
was in the cell to the left, and the present cell the cell to the right.

If both the present cells and all its neighbours are intermediate symbols
with a uniform internal counter, we apply the swap rule case. The active
memory is computed as before, by taking the maximum between the memories
of the neighbours.

Finally, if the present cell is in the intermediate
alphabet and has at
least one neighbour with a value in $S$,
we propagate that value to converge to a fixed point.

An execution of our generalised solution over a two dimensional configuration
of size $9$ over the alphabet $\{0, 1, 2\}$ is shown in
Figure~\ref{figure-exec2D}.

\begin{figure}[t!]
  \begin{center}
    \scalebox{0.8}{
\begin{tikzpicture}
  \newcommand{\xgap}{0.9}
  \newcommand{\ygap}{1.4}
  \newcommand{\largexgap}{3.1cm}
  \newcommand{\largeygap}{-4.8cm}

  \begin{scope}[xshift=\largexgap * 0, yshift=\largeygap * 0]
    \draw[gray!50!white] foreach \x in {0, ..., 3} {
      (\x * \xgap - \xgap / 2, - \ygap / 2) -- (\x * \xgap - \xgap / 2, \ygap * 2.5)
    };
    \draw[gray!50!white] foreach \y in {0, ..., 3} {
      (- \xgap / 2, \y * \ygap - \ygap / 2) -- (\xgap * 2.5, \y * \ygap - \ygap / 2)
    };

    \foreach \x\y\nodetext
    in {
      0/0/2, 1/0/2, 2/0/0,
      0/1/1, 1/1/2, 2/1/2,
      0/2/0, 1/2/1, 2/2/0
    }
    {
      \draw (\x * \xgap, \y * \ygap) node {\nodetext};
    }
  \end{scope}

  \begin{scope}[xshift=\largexgap * 1, yshift=\largeygap * 0]
    \draw[gray!50!white] foreach \x in {0, ..., 3} {
      (\x * \xgap - \xgap / 2, - \ygap / 2) -- (\x * \xgap - \xgap / 2, \ygap * 2.5)
    };
    \draw[gray!50!white] foreach \y in {0, ..., 3} {
      (- \xgap / 2, \y * \ygap - \ygap / 2) -- (\xgap * 2.5, \y * \ygap - \ygap / 2)
    };

    \draw (0 * \xgap, 2 * \ygap) node {0};

    \foreach \x\y\tapesymbol\memory
    in {
      0/0/2/S, 1/0/2/S, 2/0/0/S,
      1/1/X/S, 2/1/2/S,
      1/2/X/\{1\}
    }
    {
      \draw (\x * \xgap, \y * \ygap)
        node {$\begin{matrix}
          \circ \\
          \tapesymbol \\
          \memory
        \end{matrix}$};
    }

    \draw (0 * \xgap, 1 * \ygap)
      node {$\begin{matrix}
        \circ \\
        1 \\
        \{0, 1\}
      \end{matrix}$};

    \draw (2 * \xgap, 2 * \ygap)
      node {$\begin{matrix}
        \circ \\
        X \\
        \{0, 1\}
      \end{matrix}$};
  \end{scope}

  \begin{scope}[xshift=\largexgap * 2, yshift=\largeygap * 0]
    \draw[gray!50!white] foreach \x in {0, ..., 3} {
      (\x * \xgap - \xgap / 2, - \ygap / 2) -- (\x * \xgap - \xgap / 2, \ygap * 2.5)
    };
    \draw[gray!50!white] foreach \y in {0, ..., 3} {
      (- \xgap / 2, \y * \ygap - \ygap / 2) -- (\xgap * 2.5, \y * \ygap - \ygap / 2)
    };

    \draw (0 * \xgap, 2 * \ygap)
      node {$\begin{matrix}
        \circ \\
        0 \\
        S
      \end{matrix}$
    };

    \foreach \x\y\tapesymbol\memory
    in {
      2/0/X/S,
      0/1/X/\{1\}, 1/1/X/\{1\},
      1/2/X/\emptyset, 2/2/X/\emptyset
    }
    {
      \draw (\x * \xgap, \y * \ygap)
        node {$\begin{matrix}
          \bullet \\
          \tapesymbol \\
          \memory
        \end{matrix}$};
    }

    \draw (0 * \xgap, 0 * \ygap)
      node {$\begin{matrix}
        \bullet \\
        2 \\
        \{1, 2\}
      \end{matrix}$};

    \draw (1 * \xgap, 0 * \ygap)
      node {$\begin{matrix}
        \bullet \\
        2 \\
        \{1, 2\}
      \end{matrix}$};

    \draw (2 * \xgap, 1 * \ygap)
      node {$\begin{matrix}
        \bullet \\
        X \\
        \{1, 2\}
      \end{matrix}$};
  \end{scope}

  \begin{scope}[xshift=\largexgap * 3, yshift=\largeygap * 0]
    \draw[gray!50!white] foreach \x in {0, ..., 3} {
      (\x * \xgap - \xgap / 2, - \ygap / 2) -- (\x * \xgap - \xgap / 2, \ygap * 2.5)
    };
    \draw[gray!50!white] foreach \y in {0, ..., 3} {
      (- \xgap / 2, \y * \ygap - \ygap / 2) -- (\xgap * 2.5, \y * \ygap - \ygap / 2)
    };

    \draw (0 * \xgap, 2 * \ygap)
      node {$\begin{matrix}
        \bullet \\
        0 \\
        S
      \end{matrix}$
    };

    \foreach \x\y\tapesymbol\memory
    in {
      0/0/X/\{2\}, 1/0/2/\{2\}, 2/0/X/\{2\},
      0/1/X/\emptyset, 1/1/X/\emptyset, 2/1/X/\emptyset,
      1/2/X/\emptyset, 2/2/X/\emptyset
    }
    {
      \draw (\x * \xgap, \y * \ygap)
        node {$\begin{matrix}
          \circ \\
          \tapesymbol \\
          \memory
        \end{matrix}$};
    }
  \end{scope}

  \begin{scope}[xshift=\largexgap * 4, yshift=\largeygap * 0]
    \draw[gray!50!white] foreach \x in {0, ..., 3} {
      (\x * \xgap - \xgap / 2, - \ygap / 2) -- (\x * \xgap - \xgap / 2, \ygap * 2.5)
    };
    \draw[gray!50!white] foreach \y in {0, ..., 3} {
      (- \xgap / 2, \y * \ygap - \ygap / 2) -- (\xgap * 2.5, \y * \ygap - \ygap / 2)
    };

    \draw (0 * \xgap, 2 * \ygap)
      node {$\begin{matrix}
        \circ \\
        X \\
        \{0, 2\}
      \end{matrix}$
    };

    \foreach \x\y\tapesymbol\memory
    in {
      0/0/X/\emptyset, 1/0/X/\{2\}, 2/0/X/\{2\},
      0/1/X/\emptyset, 1/1/X/\emptyset, 2/1/X/\emptyset,
      1/2/X/\emptyset, 2/2/X/\emptyset
    }
    {
      \draw (\x * \xgap, \y * \ygap)
        node {$\begin{matrix}
          \bullet \\
          \tapesymbol \\
          \memory
        \end{matrix}$};
    }
  \end{scope}

  \begin{scope}[xshift=\largexgap * 0, yshift=\largeygap * 1]
    \draw[gray!50!white] foreach \x in {0, ..., 3} {
      (\x * \xgap - \xgap / 2, - \ygap / 2) -- (\x * \xgap - \xgap / 2, \ygap * 2.5)
    };
    \draw[gray!50!white] foreach \y in {0, ..., 3} {
      (- \xgap / 2, \y * \ygap - \ygap / 2) -- (\xgap * 2.5, \y * \ygap - \ygap / 2)
    };

    \draw (0 * \xgap, 2 * \ygap)
      node {$\begin{matrix}
        \bullet \\
        X \\
        \{2\}
      \end{matrix}$
    };

    \foreach \x\y\tapesymbol\memory
    in {
      0/0, 1/0, 2/0,
      0/1, 1/1, 2/1,
      1/2, 2/2
    }
    {
      \draw (\x * \xgap, \y * \ygap)
        node {$2$};
    }
  \end{scope}

  \begin{scope}[xshift=\largexgap * 1, yshift=\largeygap * 1]
    \draw[gray!50!white] foreach \x in {0, ..., 3} {
      (\x * \xgap - \xgap / 2, - \ygap / 2) -- (\x * \xgap - \xgap / 2, \ygap * 2.5)
    };
    \draw[gray!50!white] foreach \y in {0, ..., 3} {
      (- \xgap / 2, \y * \ygap - \ygap / 2) -- (\xgap * 2.5, \y * \ygap - \ygap / 2)
    };

    \foreach \x\y\tapesymbol\memory
    in {
      0/0, 1/0, 2/0,
      0/1, 1/1, 2/1,
      0/2, 1/2, 2/2
    }
    {
      \draw (\x * \xgap, \y * \ygap)
        node {$2$};
    }
  \end{scope}

  \begin{scope}[xshift=\largexgap * 4, yshift=\largeygap * 1]
    \draw[gray!50!white] foreach \x in {0, ..., 3} {
      (\x * \xgap - \xgap / 2, - \ygap / 2) -- (\x * \xgap - \xgap / 2, \ygap * 2.5)
    };
    \draw[gray!50!white] foreach \y in {0, ..., 3} {
      (- \xgap / 2, \y * \ygap - \ygap / 2) -- (\xgap * 2.5, \y * \ygap - \ygap / 2)
    };

    \foreach \x\y\ry
    in {
      0/0/2, 1/0/2, 2/0/2,
      0/1/1, 1/1/1, 2/1/1,
      0/2/0, 1/2/0, 2/2/0
    }
    {
      \draw (\x * \xgap, \y * \ygap)
        node {$(\x, \ry)$};
    }
  \end{scope}

\end{tikzpicture}
}
  \end{center}
  \caption{
    \label{figure-exec2D}
    An execution of $F_{\{0, 1, 2\}, 2}$ on a $3$ by $3$ configuration, with the set $S$ of the input alphabet being $\{0, 1, 2\}$.
    This execution lasts for $7$ configurations, which are illustrated
    from left to right, top to bottom. 
    Within each configuration, the sequential order is from left to right, and from top to bottom.
    The element at the bottom right
    of the figure indicates the coordinates of the cells; for instance,
    the neighbours of $(0, 0)$ are the cells $(2, 0)$, $(2, 2)$ and $(0, 0)$ itself.
    As all these cells have the value $0$ in the initial configuration,
    the cell $(0, 0)$ does not change value during the first update.
  }
\end{figure}

\begin{theorem}
  The sequential
  cellular automaton $F_{S, d}$ is a solution to the density classification
  task over the set $S$ in $d$ dimensions using an intermediate alphabet.
\end{theorem}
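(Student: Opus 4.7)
The plan is to reproduce the two-stage argument used in the one-dimensional proofs: first establish invariants analogous to Lemmas~\ref{lemma-invariant} and~\ref{lemma-count} in the $d$-dimensional setting, and then count the propagation phases and conclude with the convergence phase.

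First I would verify that the active memory is well-defined. For any cell $(x_1, \ldots, x_d)$ about to be updated, its predecessors in the lexicographic scan that lie in the specified neighbourhood form a chain along which, during the current propagation phase, the rule has at each step either copied or strictly enlarged the memory by inclusion. Thus, among the neighbours already rewritten with the current internal counter, the inclusion-maximum memory is unique and equals the memory of the most recently updated neighbour. This allows the $d$-dimensional local rule to be applied as if it were the one-dimensional rule with that active memory placed to the left.

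Second, with the active memory thus fixed, the $d$-dimensional rule reduces case-by-case to the one-dimensional rule over $S$, so Lemmas~\ref{lemma-invariant} and~\ref{lemma-count} carry over verbatim: during any propagation phase, the multiset of symbols in $S$ present in the configuration plus the contents of the active memory is preserved, and after $p$ complete propagation phases, exactly $p$ occurrences of each symbol of $S$ have been withdrawn across the configuration and the active memory combined. I would conclude as in the generalised one-dimensional proof: let $m$ be the count of the second most frequent symbol in the input configuration $x$; after $m$ propagation phases, only the strict-majority symbol $s$ remains (together with cells carrying $X$), so the $(m+1)$-st propagation phase ends with singleton active memory $\{s\}$, triggering the swap-into-convergence rule case. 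The convergence phase then propagates $s$ across the cuboid, since every cell eventually has a neighbour already set to the value $s \in S$ under the scan order, producing the uniform fixed point $s^{n_1 n_2 \cdots n_d}$.

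The main obstacle is the first step: justifying that the inclusion-maximum among the neighbouring intermediate memories indeed coincides with the memory of the most recently updated cell in the current cycle. In one dimension this is automatic because the scan produces a linear chain of memories; in higher dimensions one has to show that the corner-containing neighbourhood always intersects the scan prefix along a monotone chain within a single propagation phase, and that the kickstart and swap rule cases do not disrupt this monotonicity across cycle boundaries. Once this structural property is secured by an induction on the scan, the rest of the argument is a faithful translation of the one-dimensional proof for the alphabet $S$.
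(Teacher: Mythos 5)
Your proposal follows essentially the same route as the paper's proof: reduce $F_{S,d}$ to the one-dimensional rule $F_S$ by recovering the active memory as the inclusion-maximum over the intermediate-valued neighbours, and then transport Lemmas~\ref{lemma-invariant} and~\ref{lemma-count} and the phase-counting argument verbatim. The one step you explicitly leave open --- that the inclusion-maximum coincides with the memory of the most recently updated cell --- is exactly the point the paper settles, and it follows from the shape of the neighbourhood rather than from a separate induction on the scan: the neighbourhood of $(x_1,\ldots,x_d)$ is chosen to contain, for every $k$, the corner $(x_1-1,\ldots,x_k-1,x_{k+1},\ldots,x_d)$, and when the scan order advances by wrapping through the first $k-1$ dimensions this corner is precisely the previously updated (active) cell, while the first $k-1$ neighbours have not yet been touched in the current propagation phase. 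Since the memory only accumulates along a propagation phase, the active cell's memory contains every other memory written during the current phase in the neighbourhood, so the maximum is well defined and equals the active memory; this observation is what you should supply in place of the deferred ``structural property.'' A further small omission: before counting propagation phases you need to know that a kickstart actually occurs in the $d$-dimensional setting; the paper gets this by noting that in a non-uniform configuration a path of neighbouring cells joins two cells with different values in $S$, hence some cell has a neighbour with a different value and the kickstart case fires during the first sweep. With those two points added, your argument matches the paper's.
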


\begin{proof}
  This proof works by showing that $F_{S, d}$
  boils down to the application of $F_S$ to the $d$-dimensional configuration
  as if it were one-dimensional, by reconstructing the active memory from
  its neighborhood.
  We will show that this is the case on every phase of the solution.

  First, observe that if the configuration is uniform over $S$, then
  it is a fixed point of $F_{S, d}$.
  However, if the configuration is non-uniform over $S$,
  then there exist two locations that differ. We claim that if two
  locations differ, then a kickstart rule case happens in the first update
  of the configuration.
  This is true because a path can be made from any two locations of the
  configuration only using jumps in the local neighbourhood of the cells.
  If the starting and ending points of such a path have different values,
  then this path is bound to have two consecutive values that differ,
  and those two values would be the value of a cell and one of its neighbours,
  which leads to a kickstart rule case when this cell is updated.

  Let us assume that the intermediate symbol introduced in
  the kickstart rule case
  is in the cell
  of coordinates $(x_1, x_2, \ldots x_d)$.
  Let us show that a propagation phase now starts and replaces the entire
  configuration with intermediate symbols.

  If we assume that $x_1 < n_1 - 1$,
  then the next coordinate to be updated is $(x_1 + 1, x_2, \ldots, x_d)$.
  This new cell has not been updated yet since the kickstart, and thus
  contains a symbol in $S$. The first cell in its neighbourhood is
  the active cell, and thus the propagation rule case is applied.

  Let us now assume that $x_1 = n_1 - 1$, but that $x_2 < n_2 - 1$.
  This means that the next cell to
  be updated is of coordinates $(x_1 + 1, x_2 + 1, x_3, \ldots, x_d)$.
  Similarly to before, this cell has not yet been updated and contains a
  symbol over $S$. Moreover, its neighbour at coordinate
  $(x_1, x_2 + 1, x_3, \ldots, x_d)$ has not been updated either, and
  the neighbour at coordinate $(x_1, x_2, \ldots, x_d)$
  is the active cell, and contains an intermediate symbol.
  This means that we apply a propagation rule case; to do so, the
  local rule is defined to apply a propagation phase as in $F_S$ by
  taking the active memory as the largest memory within its neighbourhood.
  As the memory only accumulates symbols
  throughout a propagation phase, taking the largest set (in the sense
  of set inclusion) is always a defined operation that computes
  the correct active memory.
  Note that the configuration symbol of the active cell (in $S \cup X$)
  is actually never relevant to apply the local rule, and it is not necessary
  to know its value. It follows that one step of propagation is applied
  as it would in the one-dimensional solution $F_S$.

  This logic works for any number of dimensions; when the sequential order
  jumps in the direction of the $k$-th dimension, the $k - 1$ first neighbours
  of the new cell have not yet been updated in the current propagation phase,
  but the $k$-th neighbour is the
  active cell, the last cell to be updated.
  This means that the propagation phase operates as the one
  dimensional local rule would, up to the point where the entire configuration
  has been converted to the intermediate alphabet with a uniform internal counter.
  The new cell to be updated has the active cell within its neighbourhood,
  which means that it can accurately compute the active memory by taking
  the largest memory in its neighbourhood. If this active
  memory is a singleton, this means that the previous propagation phases
  removed all symbols except one, and the solution can converge. If the active memory
  is empty, then the density is not defined over the initial configuration and
  the behaviour is said to be undefined.
  If the active memory has size at least $2$,
  then the solution starts a new propagation phase by emptying the memory
  and updating the internal counter.

  Overall, this solution successfully explores a $d$-dimensional configuration
  to apply the local rule as if it were one-dimensional, therefore leading to the result. 
\end{proof}

\section{Closing words}

The presented solution pushes the boundary of the classification tasks known
to be possible using cellular automata. It proves that using an
intermediate alphabet together with a sequential update schedule is
enough to solve the density classification task.

We are of the opinion that a solution of this exact problem using the
synchronous update schedule does not exist. Proving this conjecture would
be a fitting continuation of the present work, and would confirm that the
present solution is in fact one example of some minimal requirements
that are sufficient to solve the density classification task.

\section*{Acknowledgements}

P.P.B. thanks the Brazilian agencies CNPq (Conselho Nacional de Desenvolvimento Cient\'{i}fico e Tecnol\'{o}gico) for the research grant PQ 303356/2022-7, and CAPES (Coordena\c{c}\~{a}o de Aperfei\c{c}oamento de Pessoal de N\'{i}vel Superior) for Mackenzie-PrInt research grant no.\ 88887.310281/2018-00. P.P.B. and E.R. jointly thank CAPES for the research grant STIC-AmSud no.\ 88881.694458/2022-01, and P.P. thanks CAPES for the postdoc grant no.\ 88887.833212/2023-00.

\bibliographystyle{plain}
{\small{\bibliography{bib}}}

\end{document}